\documentclass[12pt,notitlepage]{article}
\frenchspacing
\usepackage{amsmath}
\usepackage{amssymb}
\usepackage{amscd}
\usepackage{amsfonts}
\usepackage{amsthm}
\usepackage{mathrsfs}
\usepackage[matrix,arrow,curve]{xy}
\usepackage{amsbsy}

\newcommand{\beq}[1]{\begin{equation}\label{#1}}
\newcommand{\eq}{\end{equation}}

  \newtheorem{theorem}{Theorem}[section]
  \newtheorem{df}[theorem]{Definition}
  \newtheorem{prop}[theorem]{Proposition}
  \newtheorem{lemma}[theorem]{Lemma}
  
\theoremstyle{remark}
  \newtheorem{ex}{Example}
  \newtheorem{rem}[theorem]{Remark}

\begin{document}

\begin{flushright}
ITEP-TH-32/19
\end{flushright}

\vspace{3cm}

\begin{center}
{\Large{\bf On the invariants of the full symmetric Toda system}}
\vspace{.5cm}
\ \\

Yu.B. Chernyakov\footnote{Institute for Theoretical and Experimental Physics, Bolshaya Cheremushkinskaya, 25,
117218 Moscow, Russia.}$^{,}$\footnote{Joint Institute for Nuclear Research, Bogoliubov Laboratory of Theoretical Physics, 141980 Dubna, Moscow region, Russia.}, chernyakov@itep.ru\\
G.I Sharygin\footnotemark[1]$^{,}$\footnotemark[2]$^{,}$\footnote{Lomonosov
Moscow State University, Faculty of Mechanics and Mathematics, GSP-1, 1 Leninskiye Gory, Main Building, 119991 Moscow, Russia.}, sharygin@itep.ru\\
A.S. Sorin\footnotemark[2]$^{,}$\footnote{National Research Nuclear University MEPhI
(Moscow Engineering Physics Institute),
Kashirskoye shosse 31, 115409 Moscow, Russia}$^{,}$\footnote{Dubna State University,
141980 Dubna (Moscow region), Russia.}, sorin@theor.jinr.ru\\

\end{center}

\begin{abstract}
In this paper we continue our study of the geometric properties of full symmetric Toda systems from \cite{CSS14,CSS17,CSS19}. Namely we describe here a simple geometric construction of a commutative family of vector fields on compact groups, that include the Toda vector field, i.e. the field, which generates the full symmetric Toda system associated with the Cartan decomposition of a semisimple Lie algebra. Our construction makes use of the representations of the semisimple algebra and does not depend on the splitness of the Cartan pair. It is very close to the family of invariants and semiinvariants of the Toda system associted with $SL_n$, introduced in \cite{CS}.
\end{abstract}

\medskip
\section{Introduction}
In this section we recall the basic constructions and definitions from Lie groups theory that we shall need in the rest of the paper. In particular, we define the Toda system and Toda fields here.

Let $\mathfrak g_\mathbb C$ be a complex semisimple Lie algebra, and $\mathfrak g$ its (non-compact) real form\footnote{From now on all the Lie algebras, Lie groups etc. in our paper will be real, unless otherwise stated and we shall omit the ground field in our notation; i.e. $SL_n=SL_n(\mathbb R),\,SO_n=SO_n(\mathbb R)$ etc.}; let $\theta:\mathfrak g\to\mathfrak g$ be a Cartan involution of $\mathfrak g$ and let
\[
\mathfrak g=\mathfrak k\oplus\mathfrak p
\]
be the corresponding Cartan decomposition, so that the following inclusions are true (where $[,]$ is the Lie bracket of $\mathfrak g$)
\begin{equation}
\label{eq:incl1}
[\mathfrak k,\mathfrak k]\subseteq\mathfrak k,\ [\mathfrak k,\mathfrak p]\subseteq\mathfrak p,\ [\mathfrak p,\mathfrak p]\subseteq\mathfrak k.
\end{equation}
In this case $\mathfrak k$ is a maximal compact subalgebra in $\mathfrak g$, and the Killing form $B$ of $\mathfrak g$ is negative-definite on $\mathfrak k$ and positive-definite on $\mathfrak p$. We shall denote by $G$ and $K\subset G$ the corresponding (1-connected) Lie groups.

Let $\mathfrak a$ be the maximal subalgebra in $\mathfrak p$; due to the inclusions \eqref{eq:incl1}, $\mathfrak a$ is always commutative. In case $\mathfrak a$ is a maximal commutative subalgebra in $\mathfrak g$ (with respect to inclusions), the real form $\mathfrak g$ is called \textit{normal} or \textit{split}. With $\mathfrak a$ fixed we can consider the root system of $\mathfrak g$ with respect to $\mathfrak a$; recall that an element $\alpha\in\mathfrak a^*$ is called \textit{root}, if the space
\[
\mathfrak g_\alpha=\{X\in\mathfrak g\mid\! [H,X]=\alpha(H)X,\,H\in\mathfrak a\}\subseteq\mathfrak g
\]
is non-zero. Let us fix a positive basis in $\mathfrak a^*$ and let $\Delta_+$ be the set of positive roots with respect to this basis. Then one can show that $\theta(\mathfrak g_\alpha)=\mathfrak g_{-\alpha}$ for all (positive or negative) roots of $\mathfrak g$. So that choosing the bases $\{e_\alpha^1,\,\dots,e_\alpha^k\}$ of $\mathfrak g_\alpha$ for positive roots $\alpha\in\Delta_+$ we obtain bases in $\mathfrak g_{-\alpha}$ by setting $e^i_{-\alpha}=\theta(e^i_\alpha)$; remark that in the split case $k=1$ so that there is no need to use the second index. In this notation we have
\[
\begin{aligned}
\mathfrak p&=\mathfrak a\oplus\bigoplus_{\alpha\in\Delta_+,i}\mathbb R(e^i_\alpha+e^i_{-\alpha}),\\
\mathfrak k&=\bigoplus_{\alpha\in\Delta_+,i}\mathbb R(e^i_\alpha-e^i_{-\alpha}).
\end{aligned}
\]
In the same way one defines the upper and lower Borel subalgebras in $\mathfrak g$:
\[
\mathfrak b_+=\mathfrak a\oplus\bigoplus_{\alpha\in\Delta_+,i}\mathbb Re^i_\alpha,\ \mathfrak b_-=\theta(\mathfrak b_+)=\mathfrak a\oplus\bigoplus_{\alpha\in\Delta_+,i}\mathbb R e^i_{-\alpha}.
\]
Using this notation, we define the \textit{Toda projection} $M:\mathfrak p\to\mathfrak k$ by the formula
\begin{equation}
\label{eq:Todaproj}
M\left(H+\sum_{\alpha\in\Delta_+,i}a^i_\alpha(e^i_\alpha+e^i_{-\alpha})\right)=\sum_{\alpha\in\Delta_+,i}a^i_\alpha(e^i_\alpha-e^i_{-\alpha}).
\end{equation}
(see \cite{dMP}). Observe that this map depends only on the choice of the positive base.

Now the following dynamics on $\mathfrak p$ is by definition called \textit{full symmetric Toda system} on $\mathfrak g$:
\begin{equation}\label{eq:Todaeq1}
\dot L=[M(L),L],\ L\in\mathfrak p.
\end{equation}
The right hand side of this formula is an element in $\mathfrak p$ due to the embeddings \eqref{eq:incl1}. The system \eqref{eq:Todaeq1} turns out to be Hamiltonian (the Poisson structure on $\mathfrak p$ being induced from its identification with $\mathfrak b_-^*$, given by the Killing form and the Hamilton function is given by $H(L)=Tr((ad_L)^2)$) and integrable in the sense that there exists a sufficient number of involutive integrals of this system on $\mathfrak p$; part of this family is the well-known Flaschke integrals $F_k(L)=Tr((ad_L)^k),\ k=1,2,\dots,\mathrm{rk}\,\mathfrak g$, the other part is given for instance by the so-called \textit{chopping procedure} (see \cite{DLNT}; see also \cite{CS2} for a detailed analysis of the number of independent integrals in case $G=SL_n$).

The existence of Flaschke integrals implies that any solution $L(T)$ of the \eqref{eq:Todaeq1} lies within a single orbit of the adjoint representation of $K$ on $\mathfrak p$; indeed consider the linear vector field $\tau_L=[M(L),-]$ on $\mathfrak p$, determined by this equation; then for every $L$, $\tau_L$ belongs to the subspace, spanned by the adjoint representation of $\mathfrak k$ at $L$. Thus we can in fact restrict the whole system to a fixed orbit $Ad_K(L)\subset\mathfrak p$ for some fixed $L\in\mathfrak p$; for instance since every orbit of $Ad_K$ in $\mathfrak p$ contains an element $\Lambda\in\mathfrak a$, we can take $L=\Lambda$ (observe that $\Lambda$ is unique up to an action of its stabilizer subgroup; in case of a generic element and normal real form this group is discrete).

Next we define the \textit{Toda vector fields} on the compact group $K$:
\begin{df}
\label{def:Todafield}
In the notations set here, let $\Lambda\in\mathfrak a$ be a generic element (in particular, we can assume that its centralizer group in $K$ is minimal possible). We put
\[
\mathscr T^\Lambda\in \Gamma^\infty(K,TK),\ \mathscr T^\Lambda(k)=dR_k(M(Ad_k(\Lambda))),
\]
where as usual $R_k$ denotes the right translation by the element $k\in K$ and $Ad_k(\Lambda)$ denotes the adjoint action of $k\in K$ on $\Lambda\in\mathfrak a\subset\mathfrak p$.
\end{df}
The relation of this field with equation \eqref{eq:Todaeq1} is given by the following proposition (see for example \cite{K1}, \cite{S}, \cite{Per}, \cite{TF}):
\begin{prop}\label{prop:flowt}
Let $L(t),\ t\in\mathbb R$ be the solution of equation \eqref{eq:Todaeq1} with the initial condition $L(0)=L_0\in\mathfrak p$. Let $L_0=Ad_{k_0}(\Lambda_0)$ for some $\Lambda_0\in\mathfrak a$ and $k_0\in K$. Consider the one-parameter family $k(t)\in K$, generated by the field $\mathscr T^{\Lambda_0}$ with $k(0)=k_0$. Then
\[
L(t)=Ad_{k(t)}(\Lambda_0).
\]
\end{prop}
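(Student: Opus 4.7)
The plan is to verify the proposition by a direct ODE argument: I will introduce the candidate curve $\tilde L(t)=\mathrm{Ad}_{k(t)}(\Lambda_0)$, show that it satisfies the Toda equation \eqref{eq:Todaeq1} with the correct initial condition, and then invoke uniqueness of solutions of a smooth ODE to conclude that $\tilde L(t)=L(t)$ for all $t$.

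First, I would unpack what the integral curve $k(t)$ of the field $\mathscr T^{\Lambda_0}$ satisfies. By Definition~\ref{def:Todafield}, the condition $\dot k(t)=\mathscr T^{\Lambda_0}(k(t))=dR_{k(t)}\bigl(M(\mathrm{Ad}_{k(t)}(\Lambda_0))\bigr)$ can be rewritten using the identification of $T_{k(t)}K$ with $\mathfrak k$ via right translation as
\[
\dot k(t)\,k(t)^{-1}=M\bigl(\mathrm{Ad}_{k(t)}(\Lambda_0)\bigr)\in\mathfrak k.
\]
This is the key identity, and the rest of the argument is a short calculation.

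Next, I differentiate $\tilde L(t)=k(t)\Lambda_0 k(t)^{-1}$ with respect to $t$, using $\frac{d}{dt}(k^{-1})=-k^{-1}\dot k\,k^{-1}$:
\[
\dot{\tilde L}(t)=\dot k\,\Lambda_0 k^{-1}-k\Lambda_0 k^{-1}\dot k\,k^{-1}=\bigl[\dot k\,k^{-1},\,k\Lambda_0 k^{-1}\bigr]=\bigl[M(\tilde L(t)),\,\tilde L(t)\bigr],
\]
where in the last equality I substitute the identity for $\dot k\,k^{-1}$ derived above and recognize $k\Lambda_0 k^{-1}=\tilde L(t)$. Since $\tilde L(0)=\mathrm{Ad}_{k_0}(\Lambda_0)=L_0=L(0)$, both $\tilde L$ and $L$ solve the same initial value problem for \eqref{eq:Todaeq1}, and by the Picard–Lindel\"of theorem they agree for all $t$ in the common interval of existence.

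I do not expect any serious obstacle here: the only subtlety is keeping track of the conventions in Definition~\ref{def:Todafield} (in particular distinguishing right versus left translation when transporting tangent vectors between $T_eK$ and $T_kK$), and verifying that $\tilde L(t)$ stays in $\mathfrak p$ so that $M(\tilde L(t))$ is defined — but this is automatic because $\mathrm{Ad}_K$ preserves $\mathfrak p$ by \eqref{eq:incl1}. Completeness of the flow on the compact group $K$ then guarantees that $k(t)$, and hence $\tilde L(t)$, is defined for all $t\in\mathbb R$.
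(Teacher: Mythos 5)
Your proposal is correct and follows essentially the same route as the paper: both introduce $\tilde L(t)=\mathrm{Ad}_{k(t)}(\Lambda_0)$, differentiate it using the defining ODE of the integral curve of $\mathscr T^{\Lambda_0}$ to show it satisfies \eqref{eq:Todaeq1}, and conclude by uniqueness of solutions with the same initial condition. Your version merely spells out the intermediate computation $\dot k\,k^{-1}=M(\mathrm{Ad}_{k(t)}(\Lambda_0))$ and the appeal to Picard--Lindel\"of more explicitly than the paper does.
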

\begin{proof}
Denote the 1-parameter family $Ad_{k(t)}(\Lambda_0)\in\mathfrak p$ by $\tilde L(t)$. Since $\tilde L(0)=L_0=L(0)$, it is enough to check that both families satisfy the same differential equation; so we compute:
\[
\dot{\tilde L}(t)=\frac{d}{dt}(Ad_{k(t)}(\Lambda_0))=[M(Ad_{k(t)}(\Lambda_0)),Ad_{k(t)}(\Lambda_0)]=[M(\tilde L(t)),\tilde L(t)].
\]
Here at the second equality we used the relation between the adjoint actions of Lie group and its algebra as well as the definition of the field $\mathscr T^{\Lambda_0}$.
\end{proof}
In what follows we shall use the properties of the Lie groups and algebras, as well as their representation theory to construct a large commutative family of vector fields on $K$, which will contain the Toda fields $\mathscr T^\Lambda$ for all $\Lambda\in\mathfrak a$. In the particular case $G=SL_n,\,K=SO_n$ our construction is closely related to the large family of invariants and semi-invariants of the system, constructed in \cite{CS}. Unfortunately, the method we use here makes it hard to perceive if the fields we construct are in fact related with some involutive family of functions on $\mathfrak p$; so we postpone the discussion of this question to a forthcoming paper (see however \cite{RS} for a similar geometric approach).

The remaining part of the paper is organized as follows: after presenting few preliminary constructions in section 2, we proceed with the main part of our construction in section 3, where we use the representation theory to modify the commutator relations. Section 4 contains a detailed illustration of our method in case $G=SL_4,\,K=SO_4$. And the final section of the paper contains some conjectures and discussion of relation between our results and what has been known earlier.

\section{Properties of Toda fields on $K$}
It is implied by the existence of commuting Flascke integrals that the Toda fields associated with different $\Lambda$ should commute. It turns out to be true and the method of proof yields some insight into the geometry of the system:
\begin{prop}\label{prop:commutsimple}
The fields $\mathscr T^{\Lambda_1},\,\mathscr T^{\Lambda_2}$ commute for all $\Lambda_1,\,\Lambda_2\in\mathfrak a$.
\end{prop}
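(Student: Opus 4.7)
The strategy is to compute the bracket $[\mathscr T^{\Lambda_1},\mathscr T^{\Lambda_2}](k)$ in closed form and reduce its vanishing to an algebraic identity. Each Toda field has the shape $k\mapsto dR_k(X(k))$ for a smooth $X:K\to\mathfrak k$, and a short calculation (viewing such a field as a sum $\sum_i X_i(k)L^R_{T_i}$ of right-invariant fields with smooth coefficients and applying the Leibniz rule $[fX,gY]=fg[X,Y]+fX(g)Y-gY(f)X$) shows that for any two such fields,
\[
[\mathscr T^{X},\mathscr T^{Y}](k)=dR_k\bigl(\mathscr T^{X}(Y)(k)-\mathscr T^{Y}(X)(k)-[X(k),Y(k)]\bigr).
\]
Taking $X(k)=M(Ad_k\Lambda_1)$, $Y(k)=M(Ad_k\Lambda_2)$, and using $\frac{d}{dt}\big|_{0}Ad_{\exp(tX(k))k}\Lambda_2=[X(k),Ad_k\Lambda_2]$, the right hand side becomes $dR_k(Z(k))$, where, with $L_i:=Ad_k\Lambda_i\in\mathfrak p$,
\[
Z(k)=M([M(L_1),L_2])+M([L_1,M(L_2)])-[M(L_1),M(L_2)].
\]

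It remains to show $Z(k)=0$. The key structural observation is the vector-space direct-sum decomposition $\mathfrak g=\mathfrak k\oplus\mathfrak b_-$ into two Lie subalgebras: it is immediate from the bases written down in Section~1 that $\mathfrak k\cap\mathfrak b_-=0$ and that the dimensions add up. Writing $\pi_{\mathfrak k}:\mathfrak g\to\mathfrak k$ for the projection along $\mathfrak b_-$, I would check that $\pi_{\mathfrak k}|_{\mathfrak p}=M$: for any $L=H+\sum a^i_\alpha(e^i_\alpha+e^i_{-\alpha})\in\mathfrak p$ the identity $L=\sum a^i_\alpha(e^i_\alpha-e^i_{-\alpha})+(H+2\sum a^i_\alpha e^i_{-\alpha})$ places $M(L)$ in $\mathfrak k$ and the remainder in $\mathfrak b_-$. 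Next I would derive the Adler--Kostant--Symes-type identity
\[
[\pi_{\mathfrak k}(X),\pi_{\mathfrak k}(Y)]=\pi_{\mathfrak k}([\pi_{\mathfrak k}(X),Y])+\pi_{\mathfrak k}([X,\pi_{\mathfrak k}(Y)])-\pi_{\mathfrak k}([X,Y]),\qquad X,Y\in\mathfrak g,
\]
by writing $X=\pi_{\mathfrak k}(X)+\pi_{\mathfrak b_-}(X)$ and likewise for $Y$, expanding $[X,Y]$, and projecting to $\mathfrak k$ while using that $\mathfrak k$ and $\mathfrak b_-$ are closed under the bracket. Applied to $X=L_1$, $Y=L_2\in\mathfrak p$, and noting that $[\mathfrak p,\mathfrak p]\subseteq\mathfrak k$ forces $\pi_{\mathfrak k}([L_1,L_2])=[L_1,L_2]$, one obtains $Z(k)=[L_1,L_2]=Ad_k[\Lambda_1,\Lambda_2]=0$, since $\Lambda_1,\Lambda_2\in\mathfrak a$ commute.

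The most delicate step, I expect, is the opening bracket-of-vector-fields computation: because the ``coefficient'' $X(k)$ depends on $k$, the naive formula for the bracket of right-invariant fields must be corrected by the derivative terms $\mathscr T^{X}(Y),\mathscr T^{Y}(X)$, and signs (in particular the opposite-sign convention for brackets of right-invariant fields) must be tracked carefully. Once this is done, the remainder of the argument is a formal consequence of the factorization $\mathfrak g=\mathfrak k\oplus\mathfrak b_-$ together with the commutativity of $\mathfrak a$.
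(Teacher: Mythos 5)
Your proposal is correct and follows essentially the same route as the paper: the same bracket formula for right-trivialized vector fields, the same reduction to the identity $[X,Y]=M([M(X),Y])+M([X,M(Y)])-[M(X),M(Y)]$ for $X,Y\in\mathfrak p$, and the same final appeal to $Ad_k([\Lambda_1,\Lambda_2])=0$. The only deviation is that you establish the key identity via the Nijenhuis equation for the projector $\mathfrak g\to\mathfrak k$ along $\mathfrak b_-$ instead of the paper's direct root-basis computation in its Lemma --- but this is precisely the more conceptual derivation the paper itself supplies in the discussion immediately following its proof, so nothing new is needed.
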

\noindent To \textit{prove} this we begin with the following handy lemma: recall that any Lie group $G$ is a parallelizable manifold, i.e. its tangent bundle can be trivialized, one can use the left or right shifts by the group elements to identify all the tangent spaces $T_gG$ with the tangent space at the neutral (unit) element of the group, which is the Lie algebra $\mathfrak g$ of the $G$. To fix the notation we shall be using the right shift, so that every vector field has the form $X(g)=dR_g(x(g))$ for a suitable $\mathfrak g$-valued function $x:G\to\mathfrak g$; we shall say that \textit{$X(g)$ corresponds to $x(g)$}.
\begin{lemma}
\label{lem:formula1}
Let $X(g),\,Y(g)$ be two vector fields on $G$, corresponding to $x(g)$ and $y(g)$ respectively; then their commutator $[X,\,Y](g)$ corresponds to the function $z:G\to\mathfrak g$, given by the formula:
\[
z(g)=(\mathscr L_X(y))(g)-(\mathscr L_Y(x))(g)-[x(g),y(g)],
\]
where $\mathscr L_X(f)$ denotes the Lie derivative of a (vector-valued) function $f$ on $G$, and $[,]$ is the Lie bracket in $\mathfrak g$.
\end{lemma}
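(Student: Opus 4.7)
The plan is to reduce the statement to a computation for basis right-invariant vector fields and then assemble the answer using the Leibniz rule for commutators. First, fix a basis $e_1,\dots,e_n$ of $\mathfrak g$ and let $R_i$ be the right-invariant vector fields on $G$ determined by $R_i(g)=dR_g(e_i)$. Writing $x(g)=\sum_i x_i(g)e_i$ and $y(g)=\sum_j y_j(g)e_j$ with scalar functions $x_i,y_j\in C^\infty(G)$, we have $X=\sum_i x_i R_i$ and $Y=\sum_j y_j R_j$.

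Next I would expand $[X,Y]$ using the standard identity $[fA,gB]=fg[A,B]+f\,A(g)B-g\,B(f)A$, valid for arbitrary smooth vector fields $A,B$ and functions $f,g$. Applied termwise, this yields three kinds of contributions: (i) the ``algebraic'' pieces $\sum_{i,j}x_i(g)y_j(g)[R_i,R_j]$; (ii) the pieces $\sum_{i,j}x_i(g)\,R_i(y_j)(g)\,R_j$ coming from $X$ differentiating the coefficients of $Y$; and (iii) the symmetric pieces $-\sum_{i,j}y_j(g)\,R_j(x_i)(g)\,R_i$ coming from $Y$ differentiating the coefficients of $X$.

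The key ingredient now is the well-known fact that the map $\mathfrak g\to\mathrm{Vect}(G)$, $e\mapsto R_e$ is an \emph{anti}-homomorphism of Lie algebras, i.e.\ $[R_i,R_j]=-R_{[e_i,e_j]}$ (in contrast with left-invariant vector fields). This lets us rewrite the sum in (i) as $-dR_g\bigl([x(g),y(g)]\bigr)$, producing precisely the minus sign in the claimed formula. Meanwhile, the coefficients in (ii) recombine as
\[
\sum_j \Bigl(\sum_i x_i(g)R_i(y_j)(g)\Bigr)e_j=\sum_j (\mathscr L_X y_j)(g)\,e_j=(\mathscr L_X y)(g),
\]
viewing the Lie derivative componentwise on the vector-valued function $y$. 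Similarly (iii) gives $-(\mathscr L_Y x)(g)$. Applying $dR_g$ to the combined result then yields the formula for $z(g)$.

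The only subtle point is the sign in the identity $[R_i,R_j]=-R_{[e_i,e_j]}$: it is the reason why the bracket $[x(g),y(g)]$ appears with a minus sign in the statement, and it must be handled carefully so as to match the author's right-shift trivialization convention. Once that sign is correctly invoked, the remaining work is just the bookkeeping above, and the lemma follows.
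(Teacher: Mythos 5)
Your proof is correct, and it rests on exactly the same three ingredients as the paper's own argument: the decomposition of arbitrary fields into right-invariant fields with function coefficients, the Leibniz identity $[fA,gB]=fg[A,B]+fA(g)B-gB(f)A$, and the sign $[R_i,R_j]=-R_{[e_i,e_j]}$ for right-invariant fields. The only difference is presentational --- the paper verifies that the candidate expression $Z(X,Y)$ satisfies the characterizing properties of the commutator and agrees with it on right-invariant fields, whereas you carry out the explicit expansion --- so this is essentially the same proof.
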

\begin{proof}
The easiest way to prove this formula is to check the properties of the expression on the right hand side: let $Z(X,Y)(g)=dR_g(z(g))$; clearly this expression determines a vector field on $G$, which depends bilinearly on $X$ and $Y$ (over constants). We also have:
\[
Z(X,Y)(g)=-Z(Y,X)(g),\ Z(X,fY)(g)=(\mathscr L_X(f))(g)Y(g)+Z(X,Y)(g),
\]
and
\[
Z(X,Y)=-dR_g([x,y])=[X,\,Y](g),
\]
if $x,\,y\in\mathfrak g$ regarded as constant functions $G\to\mathfrak g$ (i.e. if $X$ and $Y$ are right-invariant). The rest follows from the properties of the commutator of the vector fields and the fact that each vector field on $G$ can be represented as the sum $\sum f_i(g)X_i(g)$, where $X_i$ correspond to constant functions $x_i:G\to\mathfrak g$.
\end{proof}
Applying this lemma to $G=K$ and $X(k)=\mathscr T^{\Lambda_1}(k),\ Y(k)=\mathscr T^{\Lambda_2}(k)$ so that $x(k)=M(Ad_k(\Lambda_1)),\,y(k)=M(Ad_k(\Lambda_2))$, we have:
\[
\begin{aligned}
z(g)&=M([M(Ad_k(\Lambda_1)),Ad_k(\Lambda_2)])-M([M(Ad_k(\Lambda_2)),Ad_k(\Lambda_1)])\\
      &\quad-[M(Ad_k(\Lambda_1)),M(Ad_k(\Lambda_2))].
\end{aligned}
\]
Here we used the observation that the infinitesimal part of the adjoint group representation is given by the adjoint representation of the Lie algebra. Now the statement of the proposition follows from the next Lemma:
\begin{lemma}
The map $M$ verifies the following equation:
\[
[X,Y]=M([M(X),Y])+M([X,M(Y)])-[M(X),M(Y)]
\]
for all $X,Y\in\mathfrak p$.
\end{lemma}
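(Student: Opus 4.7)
The plan is to realize $M$ (restricted to $\mathfrak{p}$) as a canonical projection of $\mathfrak{g}$ onto $\mathfrak{k}$ along the subalgebra $\mathfrak{b}_-$, and to exploit the Lie-subalgebra structure of both summands to prove the identity through projection arguments rather than through coordinate calculations.

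First I would establish the vector-space decomposition $\mathfrak{g} = \mathfrak{k} \oplus \mathfrak{b}_-$ in which both summands are Lie subalgebras, and identify the restriction of the projection $\pi_{\mathfrak{k}}:\mathfrak{g}\to\mathfrak{k}$ to $\mathfrak{p}$ with the Toda projection $M$. This amounts to checking that $\mathfrak{k}\cap\mathfrak{b}_-=0$ (an element of $\mathfrak{k}$ lying in $\mathfrak{a}\oplus\bigoplus\mathbb{R}e^i_{-\alpha}$ must have zero $\mathfrak{n}_+$-component and hence vanishes) and that $X-M(X)=X_0+2\sum a^i_\alpha e^i_{-\alpha}\in\mathfrak{b}_-$ for every $X=X_0+\sum a^i_\alpha(e^i_\alpha+e^i_{-\alpha})\in\mathfrak{p}$.

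The main step is a key auxiliary identity obtained by applying $\pi_{\mathfrak{b}_-}$. Splitting $X=M(X)+X_b$ and $Y=M(Y)+Y_b$ with $X_b,Y_b\in\mathfrak{b}_-$ and expanding by bilinearity yields
\[
[X,Y]=[M(X),M(Y)]+[M(X),Y_b]+[X_b,M(Y)]+[X_b,Y_b].
\]
The terms $[X,Y]$ and $[M(X),M(Y)]$ land in $\mathfrak{k}$ (by the inclusions $[\mathfrak{p},\mathfrak{p}],[\mathfrak{k},\mathfrak{k}]\subseteq\mathfrak{k}$) while $[X_b,Y_b]\in\mathfrak{b}_-$ (because $\mathfrak{b}_-$ is a subalgebra); projecting onto $\mathfrak{b}_-$ therefore yields the identity
\[
\pi_{\mathfrak{b}_-}([M(X),Y_b])+\pi_{\mathfrak{b}_-}([X_b,M(Y)])=-[X_b,Y_b].
\]

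To conclude, I would observe that $[M(X),Y]$ and $[X,M(Y)]$ lie in $\mathfrak{p}$ by $[\mathfrak{k},\mathfrak{p}]\subseteq\mathfrak{p}$, so $M$ coincides with $\pi_{\mathfrak{k}}=\mathrm{id}-\pi_{\mathfrak{b}_-}$ on them. Writing $[M(X),Y]=[M(X),M(Y)]+[M(X),Y_b]$ and $[X,M(Y)]=[M(X),M(Y)]+[X_b,M(Y)]$, summing, and applying $\mathrm{id}-\pi_{\mathfrak{b}_-}$, the $\pi_{\mathfrak{b}_-}$-contribution becomes $[X_b,Y_b]$ by the auxiliary identity, and the result matches the expansion $[X,Y]+[M(X),M(Y)]$, which is the claim of the lemma. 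The main obstacle is the first step: verifying the decomposition $\mathfrak{g}=\mathfrak{k}\oplus\mathfrak{b}_-$ with both summands Lie subalgebras (in the non-split case one must augment $\mathfrak{k}$ by the centralizer $\mathfrak{m}$ of $\mathfrak{a}$ in $\mathfrak{k}$, but the argument is unchanged); once this structural fact is in hand, everything else is a short formal manipulation of projections.
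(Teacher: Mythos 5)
Your argument is correct, but it is not the proof the paper gives for this lemma: there the identity is verified by a case-by-case computation in the root basis $e^i_{\pm\alpha}$ (first $X\in\mathfrak a$ against $Y=e_\alpha+e_{-\alpha}$, then two root-type vectors, with the general non-split case dismissed as ``pretty much the same''). What you propose is essentially the conceptual argument that the authors present \emph{immediately after} the lemma, where they observe that $M$ is the restriction to $\mathfrak p$ of the projector $\mathfrak g\to\mathfrak k$ along $\mathfrak b_-$ and that, both summands of $\mathfrak g=\mathfrak k\oplus\mathfrak b_-$ being Lie subalgebras, this projector satisfies the Nijenhuis identity $M^2([X,Y])=M([M(X),Y])+M([X,M(Y)])-[M(X),M(Y)]$; combined with $[\mathfrak p,\mathfrak p]\subseteq\mathfrak k$ and $M^2=M$ this is exactly the lemma. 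Your bookkeeping --- splitting $X=M(X)+X_b$, projecting the expanded bracket onto $\mathfrak b_-$ to extract the auxiliary identity $\pi_{\mathfrak b_-}([M(X),Y_b])+\pi_{\mathfrak b_-}([X_b,M(Y)])=-[X_b,Y_b]$, then applying $\mathrm{id}-\pi_{\mathfrak b_-}$ to $[M(X),Y]+[X,M(Y)]$ --- checks out and reproduces the paper's four-line computation of the Nijenhuis identity in slightly different packaging. What your route buys over the printed proof of the lemma is that it is basis-free and treats the non-split case on exactly the same footing as the split one. One small correction to your closing caveat: the decomposition $\mathfrak g=\mathfrak k\oplus\mathfrak b_-$ holds verbatim in the non-split case, since $\mathfrak m=Z_{\mathfrak k}(\mathfrak a)$ is already contained in $\mathfrak k$ and meets $\mathfrak b_-=\mathfrak a\oplus\mathfrak n_-$ trivially; it is only the displayed root-vector description of $\mathfrak k$ that must be augmented by $\mathfrak m$, not the subalgebra $\mathfrak k$ itself, so no modification of your argument is actually needed there.
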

\begin{proof}
One can prove this equality by direct computations in basis $e_{\pm\alpha}^i$, described above. For example if the real form is normal, so that there is no need to consider the indices $i$ let $X\in\mathfrak a,\,Y=e_\alpha+e_{-\alpha}$, then we have on the left $-\alpha(X)(e_\alpha-e_{-\alpha})$, and on the right
\[
\begin{aligned}
{}M([M(X),Y])&+M([X,M(Y)])-[M(X),M(Y)]\\
                    &=M([X,e_\alpha-e_{-\alpha}])=-M(\alpha(X)(e_\alpha+e_{-\alpha}))\\
                    &=-\alpha(X)(e_\alpha-e_{-\alpha}).
\end{aligned}
\]
If $Y$ is from $\mathfrak a$ and $X$ of the form $e_\alpha+e_{-\alpha}$, the computations are the same, since the formula is (anti)symmetric in $X,Y$. Finally, if $X=e_\alpha-e_{-\alpha},\,Y=e_\beta-e_{-\beta}$ and $\alpha=\beta$, then we have $0$ on both sides, because of the anti-symmetricity of the bracket; and if $\alpha\neq\beta$, then we have $\lambda_{\alpha\beta}(e_{\alpha+\beta}-e_{-\alpha-\beta})$ (for some structure constant $\lambda_{\alpha\beta}$) on the left, and on the right we have
\[
\begin{aligned}
{}M([M(X),Y])&+M([X,M(Y)])-[M(X),M(Y)]\\
                     &=2M(\lambda_{\alpha\beta}(e_{\alpha+\beta}+e_{-\alpha-\beta}))-\lambda_{\alpha\beta}(e_{\alpha+\beta}-e_{-\alpha-\beta})\\
                     &=\lambda_{\alpha\beta}(e_{\alpha+\beta}-e_{-\alpha-\beta}).
\end{aligned}
\]
In the general case computations are pretty much the same.
\end{proof}
Now to finish the proof of the proposition \ref{prop:commutsimple} it is enough to observe that $Ad_k$ is a Lie algebra homomorphism and $[\Lambda_1,\Lambda_2]=0$ since the algebra $\mathfrak a$ is commutative. \hfill$\square$

\medskip
\noindent
In effect, this construction is a reflection of a bit more general situation: in fact we can regard the map $M$ as a projection $\mathfrak g\to\mathfrak k$, corresponding to the direct sum decomposition
\begin{equation}
\label{eq:decomp1}
\mathfrak g=\mathfrak k\oplus \mathfrak b_-.
\end{equation}
More accurately, the map $M$ we considered above is just the restriction of this projection to $\mathfrak p$.

Since both summands in the decomposition \eqref{eq:decomp1} are Lie subalgebras in $\mathfrak g$, the projector $M$ verifies the so-called \textit{Nijenhuis equation}:
\[
M^2([X,Y])-M([M(X),Y])-M([X,M(Y)])+[M(X),M(Y)]=0.
\]
In previous lemma, we wrote $[X,Y]$ instead of $M^2([X,Y])$, since $M^2=M$ it being a projector, and besides this $[X,Y]\in\mathfrak k$ according to the \eqref{eq:incl1} where $\mathfrak k$ is the image of $M$. To prove this equation (up to our knowledge, first observed by Kosmann-Schwarzbach and Magri, see \cite{KScM}), we write the arguments as $X=X_1+X_2,\,Y=Y_1+Y_2$ according to the decomposition \eqref{eq:decomp1} and compute:
\[
\begin{aligned}
M^2([X,Y])&=[X_1,Y_1]+M([X_1,Y_2])+M([X_2,Y_1]),\\
M([M(X),Y])&=[X_1,Y_1]+M([X_1,Y_2]),\\
M([X,M(Y)])&=[X_1,Y_1]+M([X_2,Y_1]),\\
[M(X),M(Y)]&=[X_1,Y_1].
\end{aligned}
\]
So summing up the first and the fourth equations in this formula and subtracting the second and the third we get the necessary result.

Now we can consider the vector fields $\mathscr T^X$ for all $X\in\mathfrak g$ given by similar formula:
\begin{equation}\label{eq:Todafield2}
\mathscr T^X(k)=dR_k(M(Ad_k(X))).
\end{equation}
We shall call these fields \textit{generalised Toda fields}. In order to compute the commutator $[\mathscr T^X,\mathscr T^Y]$ we shall use the formula from lemma \ref{lem:formula1}: we have $x(k)=M(Ad_k(X)),\, y(k)=M(Ad_k(Y))$, so reasoning as above we obtain
\[
[\mathscr T^X,\mathscr T^Y](k)=dR_k(z(k)),
\]
for
\[
\begin{aligned}
z(k)&=M([M(Ad_k(X)),Ad_k(Y)])+M([Ad_k(X),M(Ad_k(Y))])\\
      &\quad-[M(Ad_(X)),M(Ad_k(Y))]\\
      &=M^2([Ad_k(X),Ad_k(Y)])=M(Ad_k([X,Y])).
\end{aligned}
\]
So, $[\mathscr T^X,\mathscr T^Y]=\mathscr T^{[X,Y]}$ and we obtain \textit{a representation of $\mathfrak g$ in vector fields on $K$}.
\begin{ex}
Let $\mathfrak g=\mathfrak{sl}_2,\ G=SL_2$, then the maximal compact subgroup in $G$ is $K=SO_2\cong S^1$. The construction we just described gives a representation of $\mathfrak{sl}_2$ by vector fields on the circle. To describe this representation, we choose the basis $E,\,F,\,H$ of $\mathfrak{sl}_2$:
\[
\begin{aligned}
E&=\begin{pmatrix}0 & 1\\ 0 & 0\end{pmatrix}, & F&=\begin{pmatrix}0 & 0\\ 1 & 0\end{pmatrix}, & H&=\begin{pmatrix}1 & 0\\ 0 & -1\end{pmatrix},
\end{aligned}
\]
so that we have the following commutator relations:
\[
\begin{aligned}
{} [H,E]&= 2E, & [H,F]&=-2F, & [E,F]&=H.
\end{aligned}
\]
On the other hand, any vector field on $S^1$ can be written as $\xi=f(\alpha)\partial_\alpha$, where $\alpha$ is the rotation angle, $\partial_\alpha=\frac{\partial}{\partial\alpha}$ and $f$ is a $2\pi$-periodic function. Now  straightforward computation gives the following representation of the Lie algebra $\mathfrak{sl}_2$ by vector fields:
\[
\begin{aligned}
\mathscr T^E&=-\cos^2{\!\alpha}\,\partial_\alpha, & \mathscr T^F&=-\sin^2{\!\alpha}\,\partial_\alpha, & \mathscr T^H&=-\sin{2\alpha}\,\partial_\alpha.
\end{aligned}
\]
Indeed the commutation relation for these fields are
\[
\begin{aligned}
{}[\mathscr T^H,\mathscr T^E]&=[\sin2\alpha\,\partial_\alpha,\cos^2\!\alpha\,\partial_\alpha]=-2\cos^2\!\alpha\,\partial_\alpha=2\mathscr T^E,\\
{}[\mathscr T^H,\mathscr T^F]&=[\sin2\alpha\,\partial_\alpha,\sin^2\!\alpha\,\partial_\alpha]=2\sin^2\!\alpha\,\partial_\alpha=-2\mathscr T^F,\\
{}[\mathscr T^E,\mathscr T^F]&=[\cos^2\!\alpha\,\partial_\alpha,\sin^2\!\alpha\,\partial_\alpha]=-\sin2\alpha\,\partial_\alpha=\mathscr T^H.
\end{aligned}
\]
\end{ex}
\begin{rem}
There is a geometric way to think about fields $\mathscr T^X$ if the real form $\mathfrak g$ is split: consider the flag space $Fl(G)=G/B_-$ (here $B_-$ is the Borel subgroup, corresponding to subalgebra $\mathfrak b_-$). As one knows, one can identify it with the quotient space of the maximal compact subgroup $K\subset G$ by the action of the group $N=K\bigcap B_-$:
\[
Fl(G)=K/(K\bigcap B_-).
\]
Now the intersection $K\bigcap B_-$ is discrete if the real form is norma, so the natural projection $K\to Fl(G)$ is local diffeomorphism. The group $G$ acts on its quotient space $Fl(G)$ by left translations, so the infinitesimal part of this action determines the representation of $\mathfrak g$ by fields $\tilde{\mathscr T}^X,\,X\in\mathfrak g$ on $Fl(G)$. Since the group $K$ is locally diffeomorphic to $Fl(G)$, we can raise $\tilde{\mathscr T}^X$ to some vector fields on $K$, verifying the same commutator relations; thus we obtain a representation of $\mathfrak g$ on $K$ by vector fields. It is easy to see, that the fields $\mathscr T^X$ correspond to $\tilde{\mathscr T}^X$ under this construction.
\end{rem}

\section{The commutative family}
The construction of representation $\mathscr T$ is quite useful. In particular, it gives a commutative family of vector fields on $K$ for every commutative subalgebra of $\mathfrak g$ (and not only for the algebra $\mathfrak a$). However, these more general fields need not commute with the Toda fields $\mathscr T^\Lambda,\ \Lambda\in\mathfrak a$. In order to obtain a commutative subalgebra of vector fields, that would include the Toda fields, let us consider the following construction.

\begin{rem}
For the sake of avoiding redundant indices, we shall write $e_\alpha$ for a root vector in $\mathfrak g$; however, all the construction described in this section can be repeated word by word for any non-split real form of a semisimple Lie algebra.
\end{rem}

We are going to change the fields $\mathscr T^X$ by multiplying them by a suitable function on $K$ so that the resulting field will commute with the Toda field $\mathscr T^\Lambda$. To this end we choose a (finite-dimensional) bottom-weight representation $\rho:\mathfrak g\to End(V)$ of $\mathfrak g$ (this notion is dual to that of top-weight representation), and let $v_-=v_1$ be the lowest (bottom weight) vector of this representation, i.e. when one passes to the root vectors basis in $\mathfrak g$, one has:
\[
\rho(e_{-\alpha})(v_-)=0,\ \rho(h)(v_-)=\omega_1(h)v_-
\]
for all negative root vectors $e_{-\alpha}\in\mathfrak g,\ \alpha\in\Delta_+$ and $h\in\mathfrak a$ (here $\omega_1\in\mathfrak a^*$ is the lowest weight of the representation). The purpose of choosing bottom- instead of top-weight representations is that in our construction we project along $\mathfrak b_-$ and not $\mathfrak b_+$, so we shall need to make sure that the elements from this algebra will not interfere with our computations (see equation \eqref{eq:compute1}).

Let $v_-=v_1,\dots,v_N$ be the basis of $V$; we can assume that this basis is orthonormal with respect to a $K$-invariant Euclidean structure $\langle,\rangle$ on $V$ (i.e. invariant with respect to the induced representation $\hat\rho$ of the maximal compact group $K$ inside the one-connected group $G$, associated with $\mathfrak g$) and diagonal with respect to the action of the subalgebra $\mathfrak a$, i.e. that
\beq{weights}
\rho(h)(v_i)=\omega_i(h)v_i,\ \omega_i\in\mathfrak a^*,\ i=1,\dots,N.
\eq
The invariance of $\langle,\rangle$ means in particular that
\begin{equation}\label{eq:invscalar}
\langle \rho(X)(v),w\rangle=\langle v,\rho(\theta(X))(w)\rangle,\ \mbox{for all}\ X\in\mathfrak g,\ v,w\in V.
\end{equation}
In particular \[\langle\rho(L)(v),w\rangle=\langle v,\rho(L)(w)\rangle\] for all $L\in\mathfrak p$ and \[\langle\rho(O)(v),w\rangle=-\langle v,\rho(O)(w)\rangle\] for all $O\in\mathfrak k$; similar equations will hold for the induced group representations.

Consider now $N$ smooth functions on $K$
\beq{function1}
F^\rho_i=\langle \hat\rho(k)(v_i),v_-\rangle,\ k\in K.
\eq
To put this simply, these functions are just the bottom row of the matrix of $\hat\rho(k)$ in the basis $v_0,\dots,v_N$. Then, since $\mathscr T^\Lambda(k)=dR_k(M(Ad_k(\Lambda))$, we have the following formula:
\[
\mathscr T^\Lambda(F_i^\rho)(k)=\langle \rho(M(Ad_k(\Lambda))(\hat\rho(k)(v_i)),v_-\rangle=-\langle\hat\rho(k)(v_i),\rho(M(Ad_k(\Lambda))(v_-)\rangle
\]
(the last equality follows from the invariance of the scalar product, see \eqref{eq:invscalar} and the formulas after it). On the other hand, it follows from the definition, that
\[
\begin{aligned}
\rho(M(Ad_k(\Lambda))(v_-)&=\rho(Ad_k(\Lambda))(v_-)-\rho(Ad_k(\Lambda)_\mathfrak h)(v_-)\\&=\rho(Ad_k(\Lambda))(v_-)-\omega_0(Ad_k(\Lambda)_\mathfrak h)v_-,
\end{aligned}
\]
where $Ad_k(\Lambda)_\mathfrak h$ is the projection of $Ad_k(\Lambda)$ on $\mathfrak a$ along the root subspaces; indeed since the vector $v_0=v_-$ is killed by every $e_{-\alpha},\ \alpha\in\Delta_+$, the difference between $\rho(Ad_k(\Lambda))(v_-)$ and $\rho(M(Ad_k(\Lambda)))(v_-)$ consists of $\rho(Ad_k(\Lambda)_\mathfrak h)(v_-)$. We shall denote the function $\omega_0(Ad_k(\Lambda)_\mathfrak h)$ by $g_{\rho,\Lambda}(k)$; it is a smooth function on the compact group $K$, hence it is bounded; observe that it depends on $\rho$ and $\Lambda$, but does not depend on $i=0,\dots,N$. Then, using this notation we have
\begin{equation}\label{eq:compute1}
\begin{aligned}
\mathscr T^\Lambda(F_i^\rho)(k)&=-\langle\hat\rho(k)(v_i),\rho(Ad_k(\Lambda))(v_-)\rangle+\omega_0(Ad_k(\Lambda)_\mathfrak h)\langle\hat\rho(k)(v_i),v_-\rangle\\
                                &=-\langle\hat\rho(k)(\rho(\Lambda)(v_i)),v_-\rangle+g_{\rho,\Lambda}(k)F^\rho_i(k)\\
                                &=(-\omega_i(\Lambda)+g_{\rho,\Lambda}(k))F^\rho_i(k).
\end{aligned}
\end{equation}
In other words, the function $F_i^\rho$ has everywhere bounded logarithmic derivative along $\mathscr T^\Lambda$. In particular, its zero surfaces are invariant surfaces of the Toda flow. These surfaces are a straightforward generalizations of the minor surfaces of \cite{CS}.

On the other hand, it is easy to see that the ratio
\beq{function2}
F^\rho_{\omega_j-\omega_i}=\frac{F_i^\rho}{F_j^\rho}
\eq
verifies the equality $\mathscr T^\Lambda(F^\rho_\alpha)=\alpha(\Lambda)F^\rho_\alpha$, where $\alpha$ is the root of $\mathfrak g$, corresponding to the difference of weights $\omega_j-\omega_i$. Combining these functions for different $\rho$ and the same $\alpha$ we can obtain functions, invariant with respect to the vector field $\mathscr T^\Lambda$ (see section 4). Similarly, we can consider the product $\mathscr T^\rho_\alpha=\frac{1}{F_\alpha^\rho}\mathscr T^{e_\alpha}=F_{-\alpha}^\rho\mathscr T^{e_\alpha}$, then
\[
[\mathscr T^\Lambda,\mathscr T^\rho_\alpha]=0.
\]
In other words, \textit{for all $\rho$ and $\alpha$ the vector fields $\mathscr T^\rho_\alpha$ commute with the Toda fields $\mathscr T^\Lambda$}. In generic case however they need not commute with each other. In order to change this, consider the action of the field $\mathscr T^X$ on $F_i^\rho$ for arbitrary $X\in\mathfrak g$. Reasoning just as above we obtain
\[
\mathscr T^X(F^\rho_i)(k)=g_{\rho,X}(k)F^\rho_i(k)-\langle\hat\rho(k)(\rho(\theta(X))(v_i)),v_-\rangle,
\]
where $g_{\rho,X}(k)$ is defined similarly to $g_{\rho,\Lambda}(k)$ and does not depend on $i$; the second term has this form because of the invariance of the scalar product, see \eqref{eq:invscalar}. In particular, if $\rho(\theta(X))(v_i)=0$ we have $\mathscr T^X(F^\rho_i)(k)=g_{\rho,X}(k)F^\rho_i(k)$. If the same holds for $v_j$, then we have $\mathscr T^X(F^\rho_{\omega_i-\omega_j})=0$.

Thus, if for the roots $\alpha,\beta\in\Delta_+$ such that $[e_\alpha,e_\beta]=0$ we find the representations $\rho_\alpha,\rho_\beta$ (on spaces $V_\alpha,\,V_\beta$ respectively) and choose the basis vectors $v_i,v_j\in V_\alpha,\,v_k,v_l\in V_\beta$ such that the differences of the corresponding weights are equal to $\alpha$ and $\beta$ and
\[
\rho_\alpha(\theta(e_\beta))(v_i)=\rho_\alpha(\theta(e_\beta))(v_j)=\rho_\beta(\theta(e_\alpha))(v_k)=\rho_\beta(\theta(e_\alpha))(v_l)=0,
\]
then $\mathscr T^{e_\alpha}(F^{\rho_\beta}_\beta)=\mathscr T^{e_\beta}(F^{\rho_\alpha}_\alpha)=0$. So we have
\[
\begin{aligned}
{}[\mathscr T^{\rho_\alpha}_\alpha&,\mathscr T^{\rho_\beta}_\beta]=[F^{\rho_\alpha}_\alpha\mathscr T^{e_\alpha},F^{\rho_\beta}_\beta\mathscr T^{e_\beta}]\\
                                                       &=F^{\rho_\alpha}_\alpha\mathscr T^{e_\alpha}(F^{\rho_\beta}_\beta)\mathscr T^{e_\beta}-F^{\rho_\beta}_\beta\mathscr T^{e_\beta}(F^{\rho_\alpha}_\alpha)\mathscr T^{e_\alpha}+F^{\rho_\alpha}_\alpha F^{\rho_\beta}_\beta[\mathscr T^{e_\alpha},\mathscr T^{e_\beta}]=0.
\end{aligned}
\]
At the same time, we have the relations
\[
\begin{aligned}
{}[\mathscr T^\Lambda,\mathscr T^{\rho_\alpha}_\alpha]&=0,\\
{}[\mathscr T^\Lambda,\mathscr T^{\rho_\beta}_\beta]&=0.
\end{aligned}
\]
So in this way we obtain a large commutative algebra of vector fields, commuting with the Toda fields.

An interesting question is whether the fields $\mathscr T^X$ and more generally their combinations such as $\mathscr T^\rho_\alpha$ can be related to some Hamiltonian functions on $\mathfrak p$. We shall address this question in a forthcoming paper.

\section{Representations $\mathfrak{sl_4}$}

In this section we apply the construction, given earlier to the algebra $\mathfrak{g=sl_4}$ and the group $K=SO_4$, more accurately, we give explicit computations of the functions $F_\alpha^\rho$ in terms of the standard coordinate functions on group $SO_4$; it turns out that these functions coincide with the earlier defined in \cite{CS2} ratios of minors.

This real form is split and the rank of $\mathfrak{sl_4}$ is equal to $3$ (the dimension of its Cartan subalgebra equal to $3$), so we have three simple roots $\alpha_{1}, \alpha_{2}, \alpha_{3}$. Fundamental representation of $\mathfrak{sl_4}$ (i.e. the tautological one) is 4-dimensional, we denote by $V=\mathbb R^4$ the space of representation; in this terms other top weight irreducible representations are 6-dimensional on space $\wedge^{2} V$ and 4-dimensional on space $\wedge^{3} V\cong V^*$. Also observe that in this case the notions of top- and bottom-weight representations coincide and due to this we can talk about top vectors rather than bottom vectors. Besides this there is duality between $V$ and $V^*$, so matrix elements of these two representations are related by transposition.
\paragraph{Tautological 4-dimensional representation on $V$}
The root vectors and the elements of Cartan subalgebra in the tautological representation are given by the following matrices:
$$
\small{
\begin{array}{c}
E_{\alpha_{1}}=
\left(
\begin{array}{cccc}
 0 & 1 & 0 & 0 \\
 0 & 0 & 0 & 0 \\
 0 & 0 & 0 & 0 \\
 0 & 0 & 0 & 0
\end{array}
\right), \
E_{\alpha_{2}}=
\left(
\begin{array}{cccc}
 0 & 0 & 0 & 0 \\
 0 & 0 & 1 & 0 \\
 0 & 0 & 0 & 0 \\
 0 & 0 & 0 & 0
\end{array}
\right), \
E_{\alpha_{3}}=
\left(
\begin{array}{cccc}
 0 & 0 & 0 & 0 \\
 0 & 0 & 0 & 0 \\
 0 & 0 & 0 & 1 \\
 0 & 0 & 0 & 0
\end{array}
\right),\\
\ \\
E_{\alpha_{1}+\alpha_{2}}=
\left(
\begin{array}{cccc}
 0 & 0 & 1 & 0 \\
 0 & 0 & 0 & 0 \\
 0 & 0 & 0 & 0 \\
 0 & 0 & 0 & 0
\end{array}
\right), \
E_{\alpha_{2}+\alpha_{3}}=
\left(
\begin{array}{cccc}
 0 & 0 & 0 & 0 \\
 0 & 0 & 0 & 1 \\
 0 & 0 & 0 & 0 \\
 0 & 0 & 0 & 0
\end{array}
\right),\\
\ \\
E_{\alpha_{1}+\alpha_{2}+\alpha_{3}}=
\left(
\begin{array}{cccc}
 0 & 0 & 0 & 1 \\
 0 & 0 & 0 & 0 \\
 0 & 0 & 0 & 0 \\
 0 & 0 & 0 & 0
\end{array}
\right),\\
\ \\
h_{\alpha_{1}}=
\left(
\begin{array}{cccc}
 1 & 0 & 0 & 0 \\
 0 & -1 & 0 & 0 \\
 0 & 0 & 0 & 0 \\
 0 & 0 & 0 & 0
\end{array}
\right), \
h_{\alpha_{2}}=
\left(
\begin{array}{cccc}
 0 & 0 & 0 & 0 \\
 0 & 1 & 0 & 0 \\
 0 & 0 & -1 & 0 \\
 0 & 0 & 0 & 0
\end{array}
\right), \
h_{\alpha_{3}}=
\left(
\begin{array}{cccc}
 0 & 0 & 0 & 0 \\
 0 & 0 & 0 & 0 \\
 0 & 0 & 1 & 0 \\
 0 & 0 & 0 & -1
\end{array}
\right).
\end{array}
}
$$
In what follows we shall often omit the superscript $\rho$ from our notation when it does not cause confusion; in this terms we are looking for the functions
defined by the formulas (\ref{function1}) and (\ref{function2}). First we must find the correspondance between the weights and the roots. Let's remind that the weights $\omega_{i}$ and the roots $\alpha_{i}$ belong to the dual space $\mathfrak{a}^{\ast}$, in our case it is 3-dimensional. The weights are defined by the the formula (\ref{weights}):
$$
\rho_{4}(h)v_{i} = h_{\alpha_{j}} v_{i} = \omega_{i}(h_{\alpha_{j}}) v_{i}
$$
Then
$$
\small{
\begin{array}{c}
h_{\alpha_{1}}v_{1} = v_{1}, \
h_{\alpha_{2}}v_{1} = 0, \
h_{\alpha_{3}}v_{1} = 0,\\
\ \\
h_{\alpha_{1}}v_{2} =-v_{2}, \
h_{\alpha_{2}}v_{2} =v_{2}, \
h_{\alpha_{3}}v_{2} =0,\\
\ \\
h_{\alpha_{1}}v_{3} =0, \
h_{\alpha_{2}}v_{3} =-v_{3}, \
h_{\alpha_{3}}v_{3} =v_{3},\\
\ \\
h_{\alpha_{1}}v_{4} =0, \
h_{\alpha_{2}}v_{4} =0, \
h_{\alpha_{3}}v_{4} =-v_{4},
\end{array}
}
$$
and
$$
\small{
\begin{array}{c}
\omega_{1}=
\left(
\begin{array}{c}
 1\\
 0\\
 0
\end{array}
\right), \
\omega_{2}=
\left(
\begin{array}{c}
 -1\\
 1\\
 0
\end{array}
\right), \
\omega_{3}=
\left(
\begin{array}{c}
 0\\
 -1\\
 1
 \end{array}
\right), \
\omega_{4}=
\left(
\begin{array}{c}
 0\\
 0\\
-1
\end{array}
\right).
\end{array}
}
$$
Taking into account that
$$
\small{
\begin{array}{c}
[h, E_{\alpha_{i}}]=\alpha_{i}(h)E_{\alpha_{i}},\\
\ \\
v_{-}=v_{1}=\left(
\begin{array}{c}
 1\\
 0\\
 0\\
 0
\end{array}
\right), \ \hat{\rho}(k)=\left(
\begin{array}{cccc}
 \psi _{11} & \psi _{12} & \psi _{13} & \psi _{14} \\
 \psi _{21} & \psi _{22} & \psi _{23} & \psi _{24} \\
 \psi _{31} & \psi _{32} & \psi _{33} & \psi _{34} \\
 \psi _{41} & \psi _{42} & \psi _{43} & \psi _{44}
\end{array}
\right) \in SO(4),
\end{array}
}
$$

\noindent we get all functions $F_{j}$ and $F_{\alpha_{ij}}$ using the formulas (\ref{function1}) and (\ref{function2})
$$
\begin{array}{c}
F_{j}=\psi_{kl}, \ \
\psi^{'}_{kl}=(-\lambda_{l}+a_{11})M_{kl}.
\end{array}
$$
Their dynamics is expressed by the formula (\ref{eq:compute1}). Finally, we get the results and arrange them in the tables below
\begin{center}
\mbox{Table 1}\\
\ \\
\small{
\begin{tabular}{|c|c|c|c|c|}
\hline\cline{1-0}
$$ & $$ & $$ & $$ & $$\\
$$  & $F_{1}$ & $F_{2}$ & $F_{3}$ & $F_{4}$\\

$$ & $$ & $$ & $$ & $$\\
\hline\cline{1-0}
$$ & $$ & $$ & $$ & $$\\
$\psi$ & $\psi_{11}$ & $\psi_{12}$ & $\psi_{13}$ & $\psi_{14}$\\

$$ & $$ & $$ & $$ & $$\\
\hline\cline{1-0}
$$ & $$ & $$ & $$ & $$\\
$\omega_{i} \in \mathfrak{a}^{\ast}$ & $\left(
\begin{array}{c}
 1\\
 0\\
 0
 \end{array}
\right)$ & $\left(
\begin{array}{c}
 -1\\
 1\\
 0
 \end{array}
\right)$  & $\left(
\begin{array}{c}
 0\\
 -1\\
 1
 \end{array}
\right)$ & $\left(
\begin{array}{c}
 0\\
 0\\
 -1
 \end{array}
\right)$\\

$$ & $$ & $$ & $$ & $$\\
\hline\cline{1-0}
$$ & $$ & $$ & $$ & $$\\
$\omega_{i}(\Lambda)$ & $\lambda_{1}$ & $\lambda_{2}$ & $\lambda_{3}$ & $\lambda_{4}$\\
$$ & $$ & $$ & $$ & $$\\

\hline
\end{tabular}
}
\end{center}

\newpage
\begin{center}
\mbox{Table 2}\\
\ \\
\small{
\begin{tabular}{|c|c|c|c|}
\hline\cline{1-0}
$$ & $$ & $$ & $$\\
$\alpha$  & $\alpha \in \mathfrak{a}^{\ast}$ & $\omega_{i}-\omega_{j}$ & $F_{\alpha_{ij}=\omega_{i}-\omega_{j}}=F^{i}/F^{j}$\\

$$ & $$ & $$ & $$\\
\hline\cline{1-0}
$$ & $$ & $$ & $$\\
$\alpha_{1}$ & $\left(
\begin{array}{c}
 2\\
 -1\\
 0
 \end{array}
\right)$ & $\omega_{1}-\omega_{2}$ & $\psi_{11}/\psi_{12}$\\

$$ & $$ & $$ & $$\\
\hline\cline{1-0}
$$ & $$ & $$ & $$\\
$\alpha_{2}$ & $\left(
\begin{array}{c}
 -1\\
 2\\
 -1
 \end{array}
\right)$ & $\omega_{2}-\omega_{3}$ & $\psi_{12}/\psi_{13}$\\
$$ & $$ & $$ & $$\\
\hline\cline{1-0}
$$ & $$ & $$ & $$\\
$\alpha_{3}$ & $\left(
\begin{array}{c}
 0\\
 -1\\
 2
 \end{array}
\right)$ & $\omega_{3}-\omega_{4}$ & $\psi_{13}/\psi_{14}$\\
$$ & $$ & $$ & $$\\
\hline\cline{1-0}
$$ & $$ & $$ & $$\\
$\alpha_{1}+\alpha_{2}$ & $\left(
\begin{array}{c}
 1\\
 1\\
 -1
 \end{array}
\right)$ & $\omega_{1}-\omega_{3}$ & $\psi_{11}/\psi_{13}$\\
$$ & $$ & $$ & $$\\
\hline\cline{1-0}
$$ & $$ & $$ & $$\\
$\alpha_{2}+\alpha_{3}$ & $\left(
\begin{array}{c}
 -1\\
 1\\
 1
 \end{array}
\right)$ & $\omega_{2}-\omega_{4}$ & $\psi_{12}/\psi_{14}$\\
$$ & $$ & $$ & $$\\
\hline\cline{1-0}
$$ & $$ & $$ & $$\\
$\alpha_{1}+\alpha_{2}+\alpha_{3}$ & $\left(
\begin{array}{c}
 1\\
 0\\
 1
 \end{array}
\right)$ & $\omega_{1}-\omega_{4}$ & $\psi_{11}/\psi_{14}$\\

\hline
\end{tabular}
}
\end{center}

\newpage
\paragraph{4-dimensional representation on $\wedge^{3} V$.}
We choose the following basis in our representation
$$
\begin{array}{c}
u_{1}=e_{1} \wedge e_{2} \wedge e_{3}, \ u_{2}=e_{1} \wedge e_{2} \wedge e_{4}, \ u_{3}=e_{1} \wedge e_{3} \wedge e_{4}, \ u_{4}=e_{2} \wedge e_{3} \wedge e_{4}.
\end{array}
$$
This representation is then isomorphic to the dual of the tautological one; in particular, using the relations between the minors of orthogonal matrices we obtain the following equalities:
$$
\begin{array}{c}
F_{j}=\psi_{kl}, \ \ \
\psi^{'}_{kl}=(\lambda_{l}-a_{44})\psi_{kl},\\
\end{array}
$$
Then, the results can be arranged in the tables below
\begin{center}
\mbox{Table 3}\\
\ \\
\scriptsize{
\begin{tabular}{|c|c|c|c|}
\hline\cline{1-0}
$$ & $$ & $$ & $$\\
$\alpha$  & $\alpha \in \mathfrak{a}^{\ast} $ & $\omega_{i}-\omega_{j}$ & $F_{\alpha_{ij}=\omega_{i}-\omega_{j}}=F^{i}/F^{j}$\\

$$ & $$ & $$ & $$\\
\hline\cline{1-0}
$$ & $$ & $$ & $$\\
$\alpha_{1}$ & $\left(
\begin{array}{c}
 2\\
 -1\\
 0
 \end{array}
\right)$ & $\omega_{1}-\omega_{2}$ & $\psi_{41}/\psi_{42}$\\

$$ & $$ & $$ & $$\\
\hline\cline{1-0}
$$ & $$ & $$ & $$\\
$\alpha_{2}$ & $\left(
\begin{array}{c}
 -1\\
 2\\
 -1
 \end{array}
\right)$ & $\omega_{2}-\omega_{3}$ & $\psi_{42}/\psi_{43}$\\
$$ & $$ & $$ & $$\\
\hline\cline{1-0}
$$ & $$ & $$ & $$\\
$\alpha_{3}$ & $\left(
\begin{array}{c}
 0\\
 -1\\
 2
 \end{array}
\right)$ & $\omega_{3}-\omega_{4}$ & $\psi_{43}/\psi_{44}$\\
$$ & $$ & $$ & $$\\
\hline\cline{1-0}
$$ & $$ & $$ & $$\\
$\alpha_{1}+\alpha_{2}$ & $\left(
\begin{array}{c}
 1\\
 1\\
 -1
 \end{array}
\right)$ & $\omega_{1}-\omega_{3}$ & $\psi_{41}/\psi_{43}$\\
$$ & $$ & $$ & $$\\
\hline\cline{1-0}
$$ & $$ & $$ & $$\\
$\alpha_{2}+\alpha_{3}$ & $\left(
\begin{array}{c}
 -1\\
 1\\
 1
 \end{array}
\right)$ & $\omega_{2}-\omega_{4}$ & $\psi_{42}/\psi_{44}$\\
$$ & $$ & $$ & $$\\
\hline\cline{1-0}
$$ & $$ & $$ & $$\\
$\alpha_{1}+\alpha_{2}+\alpha_{3}$ & $\left(
\begin{array}{c}
 1\\
 0\\
 1
 \end{array}
\right)$ & $\omega_{1}-\omega_{4}$ & $\psi_{41}/\psi_{44}$\\

\hline
\end{tabular}
}
\end{center}

We can construct the invariant functions of the Toda field taking the products of the rational functions in the last columns from the Table 2 and the Table 3 corresponding to the identical roots.

\paragraph{6-dimensional representation on $\wedge^{2} V$}
In this case we choose the following basis
$$
\begin{array}{c}
f_{1}=e_{1} \wedge e_{2}, \ f_{2}=e_{1} \wedge e_{3}, \ f_{3}=e_{1} \wedge e_{4},\\
\ \\
f_{4}=e_{2} \wedge e_{3}, \ f_{5}=e_{2} \wedge e_{4}, \ f_{6}=e_{3} \wedge e_{4},\\
\ \\
v_{-}=f_{1}.
\end{array}
$$
Direct computations then yield:
$$
\begin{array}{c}
f_{j} = e_{k} \wedge e_{l},\\
\ \\
\rho_{6}(h_{\alpha_{i}}) \circ f_{j} = \rho_{4}(h_{\alpha_{i}}) \circ (e_{k} \wedge e_{l}) = (\omega_{k}(h_{\alpha_{i}})+\omega_{l}(h_{\alpha_{i}}))(e_{k} \wedge e_{l}),\\
\ \\
\omega^{\rho_{6}}_{j}=\omega_{k}(h_{\alpha_{i}})+\omega_{l}(h_{\alpha_{i}}),\\
\ \\
\rho_{6}(\Lambda) \circ f_{j} = \rho_{4}(\Lambda) \circ (e_{k} \wedge e_{l}) = \Lambda \circ (e_{k} \wedge e_{l})= (\lambda_{k}+\lambda_{l})(e_{k} \wedge e_{l}),\\
\ \\
\rho_{6}(Ad_{k}(\Lambda)_{h}) \circ f_{1} = (a_{11}+a_{22})(e_{1} \wedge e_{2}),\\
\end{array}
$$
and for functions $F^{\rho_{6}}_{j}$ we get the following expressions using the formulas (\ref{function1}) and (\ref{eq:compute1})
$$
\begin{array}{c}
F^{\rho_{6}}_{j}=M_{kl}
\ \\
M_{kl}=\psi_{1k}\psi_{2l}-\psi_{1l}\psi_{2k},
\ \\
M^{'}_{kl}=(-\lambda_{k}-\lambda_{l}+a_{11}+a_{22})M_{kl},
\end{array}
$$
Here $M$ denotes the corresponding top minor; i.e. $M_{ij}$ is the determinant of submatrix in the matrix $\Psi=\hat\rho(k)$ of the tautological representation spanned by the first two rows of $\Psi$ and by the $i$th and $j$th columns.

\newpage
Then the computations give the following result (the row $M$ describes the corresponding top minor):
\begin{center}
\mbox{Table 4}\\
\ \\
\small{
\begin{tabular}{|c|c|c|c|c|c|c|}
\hline\cline{1-0}
$$ & $$ & $$ & $$ & $$ & $$ & $$\\
$$  & $F_{1}$ & $F_{2}$ & $F_{3}$ & $F_{4}$ & $F_{5}$ & $F_{6}$\\

$$ & $$ & $$ & $$ & $$ & $$ & $$\\
\hline\cline{1-0}
$$ & $$ & $$ & $$ & $$ & $$ & $$\\
$M$ & $M_{12}$ & $M_{13}$ & $M_{14}$ & $M_{23}$ & $M_{24}$ & $M_{34}$\\

$$ & $$ & $$ & $$ & $$ & $$ & $$\\
\hline\cline{1-0}
$$ & $$ & $$ & $$ & $$ & $$ & $$\\
$\omega_{i} \in \mathfrak{a}^{\ast}$ & $\left(
\begin{array}{c}
 0\\
 -1\\
 1
 \end{array}
\right)$ & $\left(
\begin{array}{c}
 1\\
 -1\\
 1
 \end{array}
\right)$  & $\left(
\begin{array}{c}
 1\\
 0\\
 -1
 \end{array}
\right)$ & $\left(
\begin{array}{c}
 -1\\
 0\\
 1
 \end{array}
\right)$ & $\left(
\begin{array}{c}
 -1\\
 1\\
 -1
 \end{array}
\right)$ & $\left(
\begin{array}{c}
 0\\
 -1\\
 0
 \end{array}
\right)$\\

$$ & $$ & $$ & $$ & $$ & $$ & $$\\
\hline\cline{1-0}
$$ & $$ & $$ & $$ & $$ & $$ & $$\\
$\omega_{i}(\Lambda)$ & $\lambda_{1}+\lambda_{2}$ & $\lambda_{1}+\lambda_{3}$ & $\lambda_{1}+\lambda_{4}$ & $\lambda_{2}+\lambda_{3}$ & $\lambda_{2}+\lambda_{4}$ & $\lambda_{3}+\lambda_{4}$\\
$$ & $$ & $$ & $$ & $$ & $$ & $$\\

\hline
\end{tabular}
}
\end{center}

\newpage

\begin{center}
\mbox{Table 5}\\
\ \\
\small{
\begin{tabular}{|c|c|c|c|c|}
\hline\cline{1-0}
$$ & $$ & $$ & $$ & $$\\
$\alpha$  & $\alpha \in \mathfrak{a}^{\ast} $ & $\omega_{i}-\omega_{j}$ & $F_{\alpha_{ij}=\omega_{i}-\omega_{j}}=F_{i}/F_{j}$ & $F^{a}_{\alpha_{ij}}/F^{b}_{\alpha_{ij}}$\\

$$ & $$ & $$ & $$ & $$\\
\hline\cline{1-0}
$$ & $$ & $$ & $$ & $$\\
$\alpha_{1}$ & $\left(
\begin{array}{c}
 2\\
 -1\\
 0
 \end{array}
\right)$ & $\omega_{2}-\omega_{4}, \ \omega_{3}-\omega_{5}$ & $M_{13}/M_{23}, \ M_{14}/M_{24}$ & $\frac{M_{13}M_{24}}{M_{14}M_{23}}$\\

$$ & $$ & $$ & $$ & $$\\
\hline\cline{1-0}
$$ & $$ & $$ & $$ & $$\\
$\alpha_{2}$ & $\left(
\begin{array}{c}
 -1\\
 2\\
 -1
 \end{array}
\right)$ & $\omega_{5}-\omega_{6}, \ \omega_{1}-\omega_{2}$ & $M_{24}/M_{34}, \ M_{12}/M_{13}$ & $\frac{M_{24}M_{13}}{M_{34}M_{12}}$\\
$$ & $$ & $$ & $$ & $$\\
\hline\cline{1-0}
$$ & $$ & $$ & $$ & $$\\
$\alpha_{3}$ & $\left(
\begin{array}{c}
 0\\
 -1\\
 2
 \end{array}
\right)$ & $\omega_{2}-\omega_{3}, \ \omega_{4}-\omega_{5}$ & $M_{13}/M_{14}, \ M_{23}/M_{24}$ & $\frac{M_{13}M_{24}}{M_{14}M_{23}}$\\
$$ & $$ & $$ & $$ & $$\\
\hline\cline{1-0}
$$ & $$ & $$ & $$ & $$\\
$\alpha_{1}+\alpha_{2}$ & $\left(
\begin{array}{c}
 1\\
 1\\
 -1
 \end{array}
\right)$ & $\omega_{1}-\omega_{4}, \ \omega_{3}-\omega_{6}$ & $M_{12}/M_{23}, \ M_{14}/M_{34}$ & $\frac{M_{12}M_{34}}{M_{23}M_{14}}$\\
$$ & $$ & $$ & $$ & $$\\
\hline\cline{1-0}
$$ & $$ & $$ & $$ & $$\\
$\alpha_{2}+\alpha_{3}$ & $\left(
\begin{array}{c}
 -1\\
 1\\
 1
 \end{array}
\right)$ & $\omega_{4}-\omega_{6}, \ \omega_{1}-\omega_{3}$ & $M_{23}/M_{34}, \ M_{12}/M_{14}$ & $\frac{M_{23}M_{14}}{M_{34}M_{12}}$\\
$$ & $$ & $$ & $$ & $$\\
\hline\cline{1-0}
$$ & $$ & $$ & $$ & $$\\
$\alpha_{1}+\alpha_{2}+\alpha_{3}$ & $\left(
\begin{array}{c}
 1\\
 0\\
 1
 \end{array}
\right)$ & $\omega_{1}-\omega_{5}, \ \omega_{2}-\omega_{6}$ & $M_{12}/M_{24}, \ M_{13}/M_{34}$ & $\frac{M_{12}M_{34}}{M_{13}M_{24}}$\\
\hline
\end{tabular}
}
\end{center}
\ \\
The last column in this table gives the expression of the invariant function of the Toda field, equal to the ratio of two functions $F_\alpha^\rho$ (in this case we can take these functions from the same representation).

\section{Conclusions and remarks}
As we have seen in the previous sections, one can find a large number of infinitesimal symmetries of Toda field. Indeed, the size of commutative subalgebras in Lie algebras is usually rather big (\cite{Ma}); for instance in case $\mathfrak g=\mathfrak{sl}_{2k}$ the subalgebra spanned by the upper right (or bottom left) corner of the corresponding matrices is commutative. The dimension of this subalgebra is $k^2$, much larger than the dimension of the corresponding Cartan subalgebra. The analysis of representations of the group $G$ also show that the functions $F_\alpha^\rho$ with the necessary properties are usually available (the case of $G=SL_n$ is studied by methods similar to the ones we use in section 4).

Each vector field $\xi$, commuting with the Toda field, induces a 1-parameter group of diffeomorphisms $g_\xi^s$ on $K$ commuting with the Toda flow $g_\Lambda^t$. This means in particular that for any element $k\in K$, and all $t,s\in\mathbb R$ we have
\[
g_\xi^s(g_\Lambda^t(k))=g_\Lambda^t(g_\xi^s(k)).
\]
On the other hand the 1-parameter family $k(t)$ from proposition \ref{prop:flowt} can be written as $k(t)=g_\Lambda^t(k_0)$, thus we see that the formula
\[
L_\xi^{s}(t)=Ad_{g_\xi^s(g_\Lambda^t(k_0))}(\Lambda)
\]
gives a 1-parameter family of the solutions of the Toda system with initial condition $L_\xi^{s}(0)=Ad_{g^s_\xi(k_0)}(\Lambda)$.

Evidently, this construction can be easily generalised to the case of several commuting firelds. Unfortunately, because of the explicit dependence on the choice of the initial condition $k_0$ it is not clear, if this construction determines actual vector fields on the adjoint orbits of $\Lambda$. In order to check this we need to prove that the construction is invariant with respect to the right action on $K$ of the stabilizer group of $\Lambda$ (which is the cause of ambiguity in the choice of $k_0$). We can do this for the normal case, but if the real form is nonsplit, our methods yield only partial results, so we postpone the discussion of this question to a forthcoming paper.

Another interesting question is the relation of our construction with that of Reshetikhin and Schrader, see \cite{RS}. In their paper they considered a more geometric approach to the Toda system: in this work the authors consider the Toda system as a Hamiltonian system on the quotient space $G/K\cong\mathfrak p$ with Poisson structure induced from the usual Lie-Poisson structure on the group, induced from the classical $r$-matrix. In this context our construction would consist of finding analogues of the fields $\mathscr T^{\rho_\alpha}_\alpha$ on this quotient space $G/K$ and proving that they are Poisson (since the space $G/K$ is $1$-connected, this is equivalent to their being Hamiltonian). It is not difficult to extend the fields $\mathscr T^{\rho_\alpha}_\alpha$ to the froup $G$ from $K$; however, proving or disproving that they can be descended to the quotient is not so easy. We hope to answer this question in a forthcoming paper.

Also observe that the Toda field $\mathscr T^\Lambda$ on $K$ is known to be gradient (with respect to a specially designed Riemannian structure on $K$). On the other hand, its counterpart on $G/K=\mathfrak p$ is Hamiltonian. It is an interesting question, whether this phenomen can be accounted for by some analogue of the duality between different homogeneous spaces, similar to the duality between spherical and hyperbolic geometry.

\paragraph{Acknowledgements}
The work of Yu.B. Chernyakov and G.I. Sharygin was partly supported by grant RFBR-18-01-00461. The work of G.I Sharygin was partly supported by Simons foundation.

\end{document}